\newtheorem{theorem}{Theorem}[section]
\newtheorem{definition}[theorem]{Definition}
\newcommand\footnoteref[1]{\protected@xdef\@thefnmark{\ref{#1}}\@footnotemark}
\begin{document}
\title{On Information Links}

\author{Pierre Baudot\\ Median Technologies,1800 Route des Crêtes, 06560 Valbonne, France\\ pierre.baudot@gmail.com}

\maketitle              

\begin{abstract}
	\begin{sloppypar}
		In a joint work with D. Bennequin \cite{Baudot2019}, we suggested that the (negative) minima of the 3-way multivariate mutual information correspond to Borromean links, paving the way for providing probabilistic analogs of linking numbers. This short note generalizes the correspondence of the minima of $k$ multivariate interaction information with $k$ Brunnian links in the binary variable case. Following \cite{Jakulin2004}, the negativity of the associated K-L divergence of the joint probability law with its Kirkwood approximation implies an obstruction to local decomposition into lower order interactions than $k$, defining a local decomposition inconsistency that reverses  Abramsky's contextuality local-global relation \cite{Abramsky2011}. Those negative k-links provide a straightforward definition of collective emergence in complex k-body interacting systems or dataset.  
	\end{sloppypar}
	
\end{abstract}
\section{Introduction}
	\begin{sloppypar}
		Previous works established that Gibbs-Shannon entropy function $H_k$ can be characterized (uniquely up to the multiplicative constant of the logarithm basis) as the first class of a Topos cohomology defined on random variables complexes (realized as the poset of partitions of atomic probabilities), where marginalization corresponds to localization (allowing to construct Topos of information) and where the coboundary operator is an Hochschild's coboundary with a left action of conditioning (\cite{Baudot2015a,Vigneaux2019}, see also the related results found independently by Baez, Fritz and Leinster \cite{Baez2014,Baez2011}). Vigneaux could notably underline the correspondence of the formalism with the theory of contextuality developed by Abramsky \cite{Abramsky2011,Vigneaux2019}. Multivariate mutual informations $I_k$ appear in this context as coboundaries \cite{Baudot2015a,Baudot2019a}, and quantify refined and local statistical dependences in the sens that $n$ variables are mutually independent if and only if all the $I_k$ vanish (with $1<k<n$, giving $(2^n-n-1)$ functions), whereas the Total Correlations $G_k$ quantify global or total dependences, in the sens that $n$ variables are mutually independent if and only if  $G_n=0$ (theorem 2  \cite{Baudot2019}).  As preliminary uncovered by several related studies, information functions and statistical structures not only present some co-homological but also homotopical features that are finer invariants \cite{Baudot2019a,Bennequin2020,Manin2020}. Notably, proposition 9 in \cite{Baudot2019}, underlines a correspondence of the minima $I_3=-1$ of the mutual information between 3 binary variables with Borromean link. 
		For $k\geq3$, $I_k$ can be negative \cite{Hu1962}, a phenomenon called synergy and first encountered in neural coding \cite{Brenner2000}  and frustrated systems \cite{Matsuda2000} (cf. \cite{Baudot2019} for a review and examples of data and gene expression interpretation). However, the possible negativity of the $I_k$ has posed problems of interpretation, motivating a series of study to focus on non-negative decomposition, “unique information” \cite{Williams2010,Olbrich2015,Bertschinger2014}. Rosas et al. used such a positive decomposition to define emergence in multivariate dataset \cite{Rosas2020}. Following \cite{Baudot2019,Baudot2019a}, the present work promotes the correspondence of emergence phenomena with homotopy links and information negativity. 
		The chain rule of mutual-information goes together with the following inequalities discovered by Matsuda \cite{Matsuda2000}. For all random variables $X_{1};..;X_{k}$ with associated joint probability distribution $P$ we have:
		$I(X_{1};..;X_{k-1}|X_{k};P) \geq 0 ~ \Leftrightarrow ~ I_{k-1} \geq I_{k}$ and 
		$I(X_{1};..;X_{k-1}|X_{k};P) < 0  ~ \Leftrightarrow ~ I_{k-1} < I_{k}$  that characterize the phenomenon of information negativity as an increasing or decreasing sequence of mutual information. It means that  positive conditional mutual informations imply that the introduction of the new variable decreased the amount of dependence, while negative conditional mutual informations imply that the introduction of the new variable increase the amount of dependence. The meaning of conditioning, notably in terms of causality, as been studied at length in Bayesian network study, DAGs  (cf. Pearl's book \cite{Pearl1988}), although not in terms of information at the nice exception of Jakulin and Bratko \cite{Jakulin2004}. Following notably their work and the work of Galas and Sakhanenko \cite{Galas2019} and Peltre \cite{Peltre2020} on M\"{o}bius functions, we adopt the convention of Interaction Information functions $J_k=(-1)^{k+1}I_k$, which consists in changing the sign of even multivariate $I_k$ (remark 10 \cite{Baudot2019}). This sign trick, as called in topology, makes that  even and odd $J_k$ are both super-harmonic, a kind of pseudo-concavity in the sens of theorem D \cite{Baudot2015a}. In terms of Bayesian networks, interaction information (respectively conditional interaction information) negativity generalize the independence relation (conditional independence resp.), and identifies common consequence scheme (or multiple cause). Interaction information (Conditional resp.) negativity can be considered as an extended or "super" independence (Conditional resp.) property, and hence $J(X_{1};..;X_{n}|X_{k};P) < 0$ means that $(X_{1};..;X_{n})$ are $n$ "super" conditionally independent given $X_{k}$. Moreover such negativity captures common causes in DAGs and Bayesian networks. Notably, the cases where 3 variables are pairwise independent but yet present a (minimally negative) $J_3$ what is called the Xor problem in Bayesian network \cite{Jakulin2004}.  
		In \cite{Baudot2019}, we proposed  that those minima correspond to Borromean links. k-links are the simplest example of rich and complex families of link or knots (link and knot can be described equivalently by their braiding or braid word). k-links are prototypical homotopical invariants, formalized as link groups by Milnor \cite{Milnor1954}. A Brunnian link is a nontrivial link that becomes a set of trivial unlinked circles if any one component is removed. They are peculiarly attracting because of their beauty and apparent simplicity, and provide a clear-cut illustration of what is an emergent or purely collective property, they can only appear in 3-dimensional geometry and above (not in 1 or 2 dimension), and their complement in the 3-sphere were coined as "the mystery of three-manifolds" by Bill Thurston in the first slide of his presentation for Perelman proof of Poincar\'{e} conjecture (cf. Figure \ref{3-links}). In what follows, we will first generalize the correspondence of $J_k$ minima with k-links for arbitrary $k$ and then establish that negativity of interaction information detects incompatible probability, which allows an interpretation with respect to Abramsky's formalism as contextual. The information links are clearly in line with the principles of higher structures developed by \cite{Baas2019,Baas2012} that uses links to account for group interactions beyond pair interaction to catch the essence of many multi-agent interactions. Unravelling the formal relation of information links with the work of Baas, or with Khovanov homology \cite{Khovanov2006} are open questions.\\
	\end{sloppypar}

	\section{Information functions - definitions}
	
	\paragraph{Entropy.} the joint-entropy is defined  by \cite{Shannon1948} for any joint-product of $k$ random variables  $(X_1,..,X_k)$ with $\forall i \in [1,..,k], X_i\leq \Omega$ and for a probability joint-distribution  $\mathbb{P}_{(X_1,..,X_k)}$:
	\begin{equation} \label{jointentropy multiple}
	H_k = H(X_{1},..,X_{k};P) = c\sum_{x_1 \in [N_1],..,x_k \in  [N_k]}^{N_1\times..\times N_k}p(x_1,..,x_k)\ln p(x_1,..,x_k)\\
	\end{equation}
	where $[N_1\times...\times N_k]$ denotes the "alphabet" of $(X_1,...,X_k)$. More precisely, $H_k$ depends on 4 arguments: first, the sample space: a finite set $N_{\Omega}$; second a probability law $P$ on $N_{\Omega}$;  third, a set of random variable on $N_{\Omega}$, which is a surjective map $X_{j}:N_{\Omega} \rightarrow N_j$ and provides a partition of $N_{\Omega}$, indexed by the elements $x_{j_i}$ of $N_j$.  $X_j$ is less fine than $\Omega$, and write $X_{j}\leq \Omega$, or $\Omega \rightarrow X_{j}$, and the joint-variable $(X_i, X_j)$ is the less fine partition, which is finer than $ X_i $ and $ X_j $; fourth, the arbitrary constant $c$ taken here as $c = -1/\ln 2$. Adopting this more exhaustive notation, the entropy of $X_{j}$ for $P$ at $\Omega$ becomes $H_{\Omega}(X_{j};P)=H(X_{j};P_{X_j})=H(X_{j*}(P))$, where $X_{j*}(P)$ is the \emph{marginal} of $P$ by $X_{j}$ in $\Omega$.
	\paragraph{Kullback-Liebler divergence and cross entropy.} Kullback-Liebler divergence \cite{Kullback1951}, is defined for two probability laws $P$ and $Q$  having included support ($support(P)\subseteq support(Q)$) on a same probability space $(\Omega,\mathcal{F},P)$, noted $p(x)$ and $q(x)$ by: 
	\begin{equation}\label{relative entropy}
	D_{\Omega}(X_j;P,Q)= D(X_j;p(x)||q(x))=c\sum_{x \in \mathscr{X}}p(x)\ln \frac{q(x)}{p(x)} 
	\end{equation}
	
	\begin{sloppypar}
		\paragraph{Multivariate Mutual informations.} The k-mutual-information (also called co-information) are defined by \cite{McGill1954,Hu1962}:
		\begin{equation}\label{n-mutual information}
		I_k=I(X_{1};...;X_{k};P) = c\sum_{x_1,...,x_k\in [N_1\times...\times N_k]}^{N_1\times...\times N_k}p(x_1.....x_k)\ln \frac{\prod_{I\subset [k];card(I)=i;i \ \text{odd}} p_I}{\prod_{I\subset [k];card(I)=i;i \ \text{even}} p_I} 
		\end{equation}
		For example, $I_2=c\sum p(x_1,x_2)\ln \frac{p(x_1)p(x_2)}{p(x_1,x_2)}$ and  the 3-mutual information is the function $I_3=c\sum p(x_1,x_2,x_3)\ln\frac{p(x_1)p(x_2)p(x_3)p(x_1,x_2,x_3)}{p(x_1,x_2)p(x_1,x_3)p(x_2,x_3)}$. 
		We have the alternated sums or inclusion-exclusion rules \cite{Hu1962,Matsuda2000,Baudot2015a}:
		\begin{equation}\label{Alternated sums of information}
		I_n=I(X_1;...;X_n;P)=\sum_{i=1}^{n}(-1)^{i-1}\sum_{I\subset [n];card(I)=i}H_i(X_I;P)
		\end{equation}
		And the dual inclusion-exclusion relation (\cite{Baudot2019a}): 
		\begin{equation}\label{Alternated sums of entropy}
		H_n=H(X_1,...,X_n;P)=\sum_{i=1}^{n}(-1)^{i-1}\sum_{I\subset [n];card(I)=i}I_i(X_I;P)
		\end{equation}
		As noted by Matsuda \cite{Matsuda2000}, it is related  to the general Kirkwood superposition approximation, derived in order to approximate the statistical physic quantities and the distribution law by only considering the k first k-body interactions terms, with $k<n$ \cite{Kirkwood1935} :  
		\begin{equation}\label{Kirkwood1935}
		\hat{p}(x_1,...,x_n)= \frac{\prod_{I\subset [n];card(I)=n-1}p(x_I)}{\frac{\prod_{I\subset [n];card(I)=n-2}p(x_I)}{\frac{\colon}{\prod_{i=1}^np(x_i)}}}= \prod_{k=1}^{n-1}(-1)^{k-1}\prod_{I\subset [n];card(I)=n-k}p(x_I)
		\end{equation}
		For example : 
		\begin{equation}\label{Kirkwood3}
		\hat{p}(x_1,x_2,x_3)= \frac{p(x_1,x_2)p(x_1,x_3)p(x_2,x_3)}{p(x_1)p(x_2)p(x_3)}
		\end{equation}
		We have directly that $I_3 = D(p(x_1,...,x_n)||\hat{p}(x_1,...,x_n)) $ and $-I_4=D(p(x_1,...,x_n)||\hat{p}(x_1,...,x_n))$. It is hence helpful to introduce the "twin functions" of k-mutual Information called the \textbf{k-interaction information}  \cite{Jakulin2004,Baudot2019} (multiplied by minus one compared to \cite{Jakulin2004}), noted $J_n$:  
		\begin{equation}\label{interaction information}
		J_n=J(X_1;...;X_n;P)=(-1)^{i-1}I_n(X_1;...;X_n;P)
		\end{equation}
		Then we have  $J_n=D(p(x_1,...,x_n)||\hat{p}(x_1,...,x_n)) $. Hence, $J_n$ can be used to quantify how much the Kirkwood approximation of the probability distribution is "good" (in the sense that $p=\hat{p}$ if and only if $J_n=D(p||\hat{p})=0 $).The alternated sums or inclusion-exclusion rules becomes \cite{Jakulin2004}: 
		\begin{equation}\label{Alternated sums of interaction_information}
		J_n=J(X_1;...;X_n;P)=\sum_{i=1}^{n}(-1)^{n-i}\sum_{I\subset [n];card(I)=i}H_i(X_I;P)
		\end{equation}
		And now, their direct sums give the multivariate entropy, for example: $H_3=H(X_1,X_2,X_3)=J(X_1)+J(X_2)+J(X_3)+J(X_1;X_2)+J(X_1;X_3)+J(X_2;X_3)+J(X_1;X_2;X_3)$
		Depending on the context (\cite{Baudot2019} (p.15)) and the properties one wishes to use, one should use for convenience either $I_k$ or $J_k$ functions.  \\
		
		\paragraph{Conditional Mutual informations.} The conditional mutual information of two variables $X_{1};X_{2}$ knowing $X_3$, also noted $X_3.I(X_{1};X_{2})$, is defined as \cite{Shannon1948}: 
		\begin{equation}\label{conditional mutual information}
		I(X_{1};X_{2}|X_3;P)=c\sum_{x_1,x_2,x_3\in [N_1\times N_2\times N_3]}^{N_1\times N_2\times N_3} p(x_1,x_2,x_3)\ln \frac{p(x_1,x_3)p(x_2,x_3)}{p(x_3)p(x_1,x_2,x_3)}
		\end{equation} 
	\end{sloppypar}
	
	\section{Information k-links}
	
	\begin{figure}
		\centering
		\includegraphics[height=5cm]{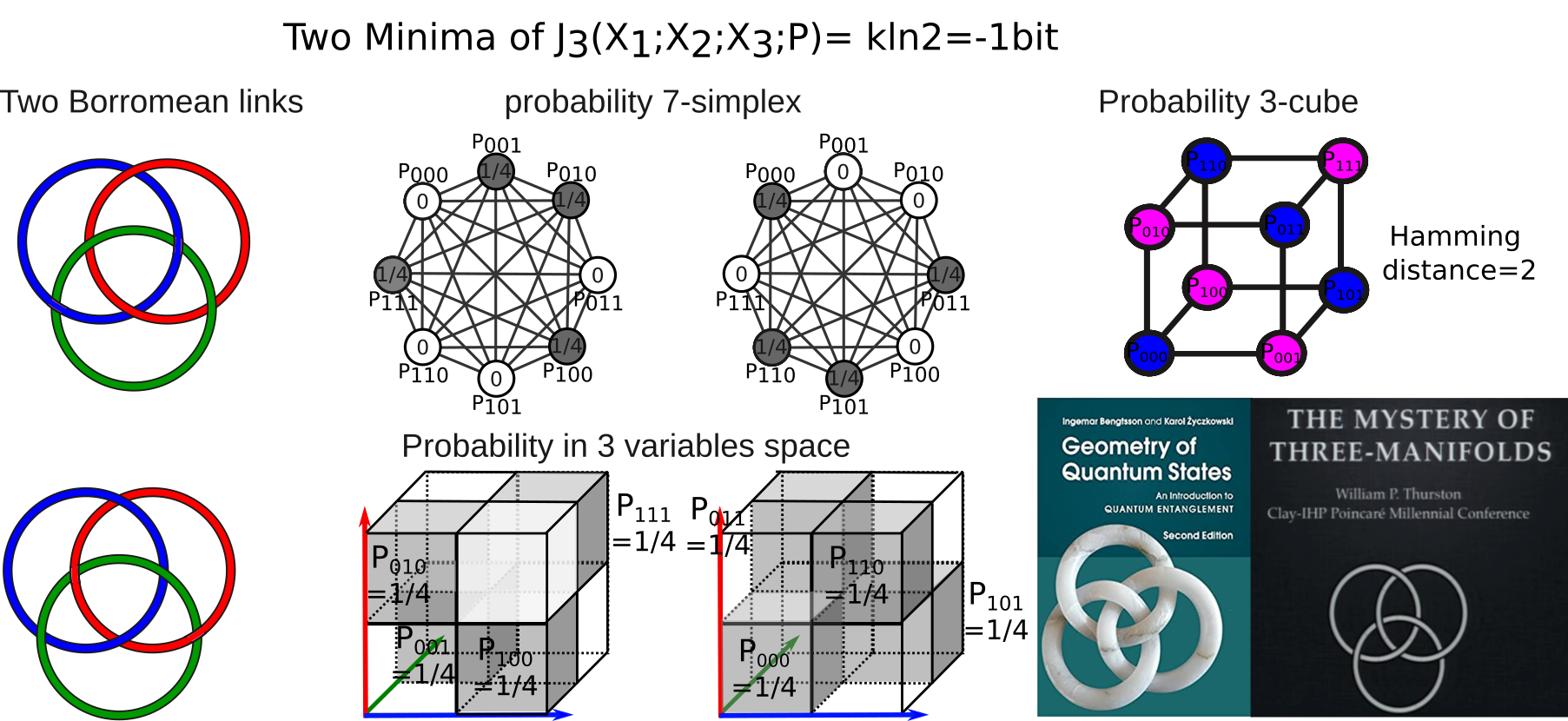}
		\caption{\textbf{Borromean 3-links of information (adapted from \cite{Baudot2019}).} from left to right. The two Borromean links (mirror images). The corresponding two probability laws in the probability 7-simplex for 3 binary random variables, and below the representation of the corresponding configuration in 3 dimensional data space. The variables are individually maximally entropic ($J_1=1$), fully pairwise independent  ($J_2=0$ for all pairs), but minimally linked by a negative information interaction ($J_3=-1$). The  corresponding graph covering in the 3-cube of the two configurations. The Hamming distance between two vertex having the same probability value is 2. The cover of Bengtsson and \.{Z}yczkowski's book on "the geometry of quantum states" and the first slide of the conference on "The Mystery of 3-Manifolds" by Bill Thurston at the Clay-IHP Milenial conference.}
		\label{3-links}
	\end{figure}
	\noindent Consider 3 binary random variables, then we have:
	\begin{sloppypar}
		\begin{theorem}[Borromean links of information \cite{Baudot2019}]
			The absolute minimum of $J_3$, equal to $-1$, is attained only in the two cases of three two by two independent unbiased variables satisfying $p_{000}=1/4,p_{001}=0, p_{010}=0, p_{011}=1/4, p_{100}=0, p_{101}=1/4, p_{110}=1/4, p_{111}=0$, or $p_{000}=0,p_{001}=1/4, p_{010}=1/4, p_{011}=0, p_{100}=1/4, p_{101}=0, p_{110}=0, p_{111}=1/4$. These cases correspond to the two borromean 3-links, the right one and the left one (cf. Figure \ref{3-links}, see \cite{Baudot2019} p.18).
		\end{theorem}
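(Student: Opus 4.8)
The plan is to exploit the standard regrouping of the three-way co-information as an ordinary mutual information minus a conditional one, which for binary variables makes the lower bound essentially immediate. Starting from the inclusion--exclusion formula \eqref{Alternated sums of information}, namely $J_3=I_3=H(X_1)+H(X_2)+H(X_3)-H(X_1,X_2)-H(X_1,X_3)-H(X_2,X_3)+H(X_1,X_2,X_3)$, one rearranges the seven entropies into
\begin{equation*}
J_3 = I(X_1;X_2;P)-I(X_1;X_2\mid X_3;P).
\end{equation*}
Mutual information is non-negative, so $I(X_1;X_2;P)\ge 0$; and because each variable is binary, the conditional term obeys $I(X_1;X_2\mid X_3;P)=\sum_{x_3}p(x_3)\,I(X_1;X_2\mid X_3{=}x_3)\le\sum_{x_3}p(x_3)=1$, since the mutual information of two binary variables never exceeds one bit. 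Hence $J_3\ge 0-1=-1$ with the normalisation $c=-1/\ln 2$, which is the asserted lower bound.

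For the equality case I would argue that $J_3=-1$ forces both inequalities to be tight at once: $I(X_1;X_2;P)=0$, so $X_1$ and $X_2$ are independent, and $I(X_1;X_2\mid X_3;P)=1$. The latter, being an average of slice-terms each bounded by $1$, requires $I(X_1;X_2\mid X_3{=}x_3)=1$ for every $x_3$ of positive mass. Here I would insert the elementary sub-claim that, for two binary variables, a mutual information of exactly one bit is reached only by the two maximally correlated laws, i.e.\ the uniform distribution on the diagonal $\{(0,0),(1,1)\}$ or on the antidiagonal $\{(0,1),(1,0)\}$; this is the statement that the capacity $\log 2$ is attained only by a uniform bijective coupling.

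It then remains to glue the two slices. First I would observe that $X_3$ cannot degenerate: if $X_3$ were almost surely constant, the marginal law of $(X_1,X_2)$ would equal a single diagonal or antidiagonal slice, contradicting $X_1\perp X_2$; so both values of $X_3$ carry positive mass, say $q=p(X_3{=}0)\in(0,1)$. Letting each slice range over $\{\text{diagonal},\text{antidiagonal}\}$ gives four combinations. The two parallel choices (both diagonal, or both antidiagonal) again couple $(X_1,X_2)$ deterministically and violate independence, whereas the two opposite choices produce a marginal $(X_1,X_2)$-law with uniform margins and $p(X_1{=}0,X_2{=}0)=q/2$; imposing $X_1\perp X_2$ forces $q=1/2$. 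Substituting back yields exactly the two eight-point laws of the statement: diagonal-at-$0$/antidiagonal-at-$1$ gives $p_{000}=p_{011}=p_{101}=p_{110}=1/4$, and the reverse gives $p_{001}=p_{010}=p_{100}=p_{111}=1/4$. These are the even- and odd-parity configurations, manifestly invariant under all permutations of $(X_1,X_2,X_3)$, consistent with the full symmetry of $I_3$.

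The main obstacle is not the lower bound, which drops out of the decomposition, but ensuring that the equality analysis is genuinely exhaustive. One must check that tightness of the \emph{averaged} conditional term forces each positively weighted slice to be extremal --- ruling out cancellations between slices --- and that no minimiser escapes to a face of the $7$-simplex where $X_3$ degenerates. The per-slice bound $I(X_1;X_2\mid X_3{=}x_3)\le 1$ settles the first point and the degeneracy remark settles the second, so the enumeration of the four slice-combinations is complete and the two displayed laws are the only minimisers.
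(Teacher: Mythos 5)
Your proof is correct, and it takes a genuinely different route from the paper. The paper does not actually re-derive this theorem (it is quoted from \cite{Baudot2019}, p.~18), and its template for the analogous $4$-link and $k$-link theorems is quite different: bound $I_k\le\min_i H(X_i)$, compute $I_k=k-H_k$ by a combinatorial identity under lower-order independence and maximal marginal entropies, then isolate the two minimisers by solving the linear system of marginal constraints together with an extremality-on-the-boundary step that rests on the super-harmonicity ``theorem D'' of \cite{Baudot2015a} --- which, as the paper's own footnote concedes, is not yet proven, making those arguments conditional. Your argument instead runs through the chain-rule decomposition $J_3=I(X_1;X_2;P)-I(X_1;X_2\mid X_3;P)$, the one-bit cap on binary mutual information for the lower bound, and a slice-wise equality analysis: each conditional law on $\{X_3=x_3\}$ must be a uniform bijective coupling, the two slices must be of opposite type (else $I(X_1;X_2)=1\ne 0$, which also rules out degenerate $X_3$), and independence forces $p(X_3{=}0)=1/2$, yielding exactly the two parity laws. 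This buys three things the paper's route does not: it is elementary and unconditional (no appeal to theorem D, no boundary argument on the simplex), it establishes the bound and the uniqueness in a single sweep, and it \emph{derives} rather than assumes that the minimisers are unbiased and pairwise independent. Two details you should make explicit: the one-line proof of your sub-claim ($1=I\le H(X_1)\le 1$ forces uniform marginals and $H(X_1\mid X_2)=H(X_2\mid X_1)=0$, hence a uniform bijective coupling), and the observation that in the mixed cases checking the single cell $p(X_1{=}0,X_2{=}0)=1/4$ suffices for full independence, since two binary variables with uniform marginals are independent as soon as one cell factorises. The trade-off is generality: the paper's scheme (granting theorem D) is built to scale to $k$-links, whereas your conditioning trick is tailored to $k=3$; iterating $I_k=I_{k-1}-I(X_1;\dots;X_{k-1}\mid X_k)$ for higher $k$ would require controlling conditional co-informations that are no longer non-negative, so the extension is not automatic.
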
 
		For those 2 minima, we have $H_1=1$, $H_2=2$, $H_3=2$, and $I_1=1$, $I_2=0$, $I_3=-1$, and $G_1=1$, $G_2=0$, $G_3=1$, and $J_1=1$, $J_2=0$, $J_3=-1$. 
		The same can be shown for arbitrary n-ary variables, which opens the question of the possiblity or not to classify more generaly others links. For example, for ternary variables, a Borromean link is achieved for the state $p_{000}=1/9,p_{110}=1/9,p_{220}=1/9,p_{011}=1/9,p_{121}=1/9,p_{201}=1/9,p_{022}=1/9,p_{102}=1/9,p_{212}=1/9$, and all others atomic probabilities are 0. The values of information functions are the same but in logarithmic basis 3 (trits) instead of 2 (bits). In other word, the probabilistic framework may open some new views on the classification of links. 
	\end{sloppypar}
	
	We now show the same result for the 4-Brunnian link, considering 4 binary random variables: 
	\begin{theorem}[4-links of information\footnote{\label{note1} As the proof relies on a weak concavity theorem D \cite{Baudot2015a} which proof has not been provided yet, this theorem shall be considered as a conjecture as long as the proof of theorem D \cite{Baudot2015a} is not given.}]
		The absolute minimum of $J_4$, equal to $-1$, is attained only in the two cases of four two by two and three by three independent unbiased variables satisfying 
		$p_{0000}=0,p_{0001}=1/8, p_{0010}=1/8, p_{0011}=0, p_{0101}=0, p_{1001}=0, p_{0111}=1/8, p_{1011}=1/8,p_{1111}=0,p_{1101}=1/8, p_{1110}=1/8, p_{0110}=0, p_{1010}=0, p_{1100}=0, p_{1000}=1/8, p_{0100}=1/8 $, or $p_{0000}=1/8,p_{0001}=0, p_{0010}=0, p_{0011}=1/8, p_{0101}=1/8, p_{1001}=1/8, p_{0111}=0, p_{1011}=0,p_{1111}=1/8,p_{1101}=0, p_{1110}=0, p_{0110}=1/8, p_{1010}=1/8, p_{1100}=1/8, p_{1000}=0, p_{0100}=0 $. These cases correspond to the two
		4-Brunnian links, the right one and the left one (cf. Figure \ref{4-links}).
	\end{theorem}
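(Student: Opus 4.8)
The plan is to prove the two halves of the extremal statement separately: an explicit upper estimate $\min J_4\le -1$ obtained by evaluating $J_4$ at the two listed laws, and a matching lower bound $J_4\ge -1$ together with uniqueness, both drawn from the weak concavity of Theorem~D \cite{Baudot2015a}. For the first half I would note that the two distributions are exactly the uniform laws supported on the eight odd-parity, respectively even-parity, strings of $\{0,1\}^4$, that is on $\{x:\ x_1\oplus x_2\oplus x_3\oplus x_4=1\}$ and its complement, each atom carrying mass $1/8$. For either parity law, fixing any $j\le 3$ of the four coordinates leaves $4-j\ge 1$ coordinates whose joint parity is unconstrained, so every $j$-marginal is uniform on $\{0,1\}^j$ and the corresponding variables are mutually independent; hence $H_1=1$, $H_2=2$, $H_3=3$, while the joint is uniform on $8$ atoms and $H_4=\log_2 8=3$. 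Substituting into the interaction-information inclusion--exclusion rule \eqref{Alternated sums of interaction_information} at $n=4$ gives $J_4=-\binom{4}{1}\cdot 1+\binom{4}{2}\cdot 2-\binom{4}{3}\cdot 3+\binom{4}{4}\cdot 3=-4+12-12+3=-1$ at both states, which is the routine half.

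For the lower bound and the uniqueness the concavity input is essential, since the elementary chain-rule expansion $I_4=I(X_1;X_2)-I(X_1;X_2|X_3)-I(X_1;X_2|X_4)+I(X_1;X_2|X_3,X_4)$, in which each conditional information lies in $[0,1]$, only delivers the weaker bound $J_4\ge -2$. Granting Theorem~D, $J_4$ is super-harmonic on the probability $15$-simplex, hence has no strict interior local minimum and attains its minimum on the boundary, where at least one atom vanishes. To pin down that minimum I would exploit the full symmetry group of $J_4$, the hyperoctahedral group $(\mathbb{Z}/2)^4\rtimes S_4$ acting on the sixteen atoms by coordinate bit-flips and variable permutations: the permutations preserve each parity class and fix both candidate laws, whereas a single bit-flip interchanges the odd- and even-parity laws, so the two candidates form a single orbit and it suffices to minimise modulo this group. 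I would then try to show that any minimiser must carry the independence signature $H_1=1,\ H_2=2,\ H_3=3$ recorded in the statement (maximal marginals, pairwise and triple independence), which forces its support into a single parity class, on which the uniform law attaining $J_4=-1$ would then be identified as the unique minimiser.

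The main obstacle is precisely this lower-bound/uniqueness step, which is blocked on two levels. Conceptually, the whole argument depends on Theorem~D of \cite{Baudot2015a}, whose proof has not yet been supplied; this is the reservation already recorded in the footnote, and without it one retains only the verified estimate $\min J_4\le -1$. Technically, even granting Theorem~D the descent is delicate: super-harmonicity on the full simplex forces the optimum only onto the $14$-dimensional boundary and is \emph{not} inherited by the lower-dimensional faces, so a careless iteration would wrongly drive the minimum to the vertices, where $J_4=0$. The real content is to show that the minimum instead stalls on the seven-dimensional parity face and to rule out every spurious boundary critical point; this face-by-face control is the faithful analogue, for $n=4$, of the $k=3$ analysis behind the Borromean theorem (Proposition~9 of \cite{Baudot2019}), and it is where the substance of the proof resides.
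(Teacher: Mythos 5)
Your verification half is correct, and in fact tidier than the paper's: identifying the two laws as the uniform distributions on the odd- and even-parity classes of $\{0,1\}^4$, checking that every $j$-marginal with $j\le 3$ is uniform and independent, and feeding $H_1=1$, $H_2=2$, $H_3=3$, $H_4=3$ into the inclusion--exclusion rule to get $J_4=-4+12-12+3=-1$ reproduces the paper's combinatorial computation. The genuine gap is in the other half, and part of your diagnosis of it is off target: the matching bound $J_4\ge -1$ does \emph{not} rest on Theorem D. The paper invokes Matsuda's inequalities $-\min_i H(X_i)\le I_4\le \min_i H(X_i)$ \cite{Matsuda2000}, which for binary variables give $|I_4|\le 1$ and hence $J_4\ge -1$ outright; your chain-rule expansion yielding only $J_4\ge -2$ is simply a weaker estimate than the one available, and your conclusion that without Theorem D ``one retains only the verified estimate $\min J_4\le -1$'' is therefore inaccurate. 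The same inequality also does structural work you postpone: in the lower-order-independent unbiased case one has $I_4=4-H_4$, and $I_4\le 1$ then forces $H_4=3$, i.e.\ saturation of the bound, which is how the paper identifies the extremal value.

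For uniqueness the paper takes a route your sketch does not replace: it writes out the $8+24+32=64$ linear equations expressing unbiased marginals together with pairwise and triplewise independence (of which only $15$ are independent), and uses Theorem D \emph{solely} to force a minimiser onto the boundary of the $15$-simplex; the resulting dichotomy $a\in\{0,1/8\}$ makes each of the two linear systems fully determined, with the two parity laws as their unique solutions. So in the paper's architecture Theorem D carries only this boundary step --- which is precisely what the footnote downgrades to a conjecture --- and there is no iterated face-by-face descent for your worry about super-harmonicity failing on lower faces to bite on. Your instinct does locate a soft spot the paper itself leaves open: the reduction of an \emph{arbitrary} minimiser to the ``$H_1=1$, $H_2=2$, $H_3=3$'' case is introduced with ``consider the case where\ldots'' rather than derived, e.g.\ from the equality conditions in Matsuda's bound, and that is essentially the step you say you ``would try to show.'' But whereas the paper closes the classification once that case is granted, your proposal proves neither the lower bound nor the independence-signature step, so as it stands it establishes only $\min J_4\le -1$.
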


	\begin{figure}
		\centering
		\includegraphics[height=5cm]{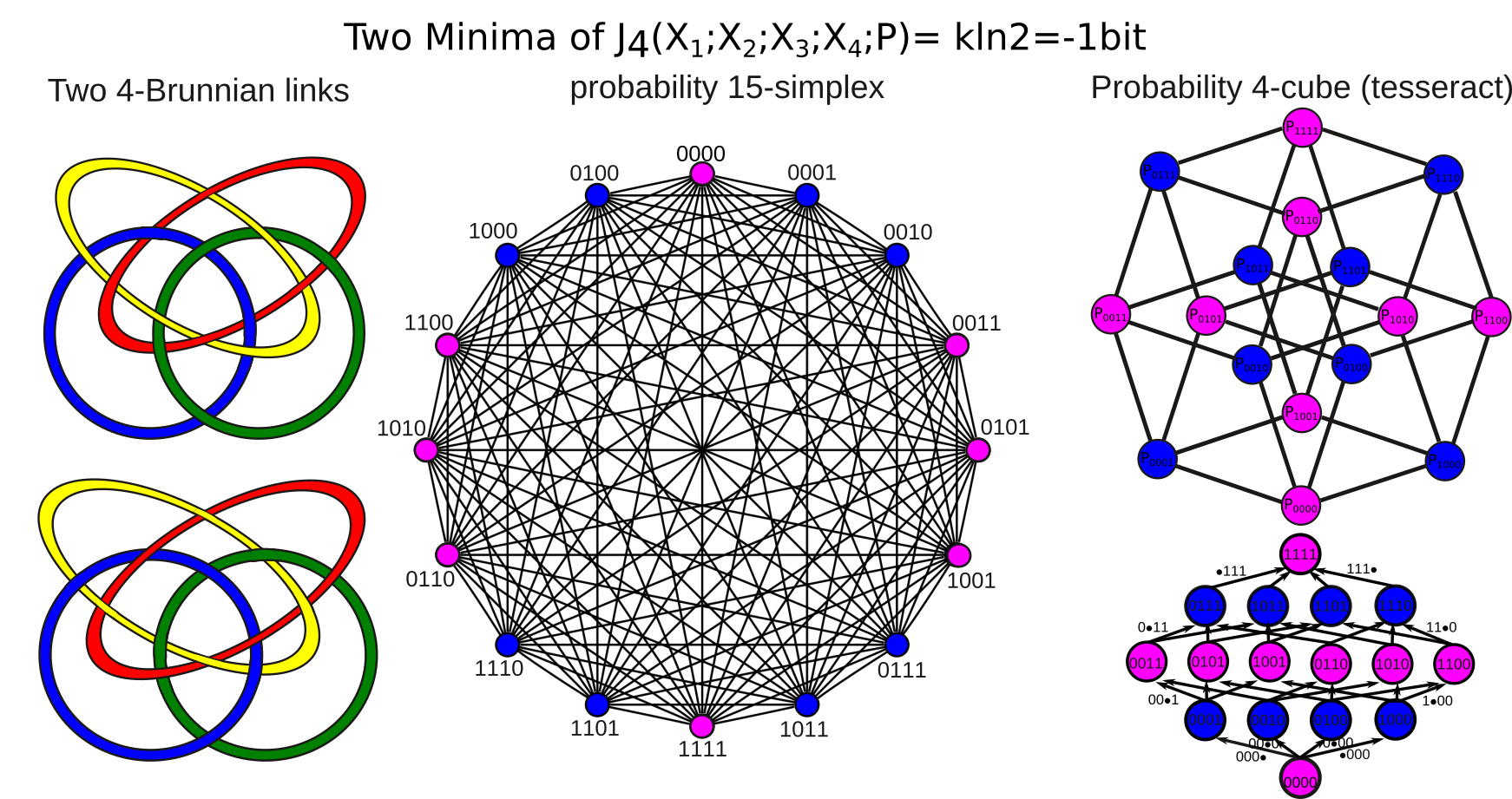}
		\caption{\textbf{4-links of information.} from left to right. The two 4-links (mirror images). The corresponding two probability laws in the probability 15-simplex for 4 binary random variables. The variables are individually maximally entropic ($J_1=1$), fully pairwise and tripletwise independent  ($J_2=0$, $J_3=0$, the link is said Brunnian), but minimally linked by a negative 4-information interaction ($J_4=-1$). The  corresponding graph covering in the 4-cube of the two configurations, called the tesseract. The Hamming distance between two vertex having the same probability value is 2. The corresponding lattice representation is illustrated bellow.}
		\label{4-links}
	\end{figure}

	\begin{proof} 
		\begin{sloppypar}
			The minima of $J_4$ are the maxima of $I_4$. We have easily $I_4\geq -\min(H(X_1),H(X_2),H(X_3),H(X_4))$ and  $I_4\leq \min(H(X_1),H(X_2),H(X_3),H(X_4))$ \cite{Matsuda2000}. Consider the case where all the variables are k-independent for all $k<4$ and all $H_1$ are maximal, then a simple combinatorial argument shows that $I_4= \binom{1}{4}.1-\binom{2}{4}.2+\binom{3}{4}.3+\binom{4}{4}.H_4$, which gives $I_4=4-H_4$. Now, since $I_4\leq \min(H(X_1),H(X_2),H(X_3),H(X_4))\leq \max(H(X_i)=1$, we have $H_4=4-1=3$  and $I_4=1$ or $J_4=-1$,  and it is a maxima of $I_4$ because it achieves the bound  $I_4\leq \min(H(X_1),H(X_2),H(X_3),H(X_4))\leq \max(H(X_i))=1$. To obtain the atomic probability values and see that there are two such maxima, let's consider all the constraint imposed by independence. We note the 16 unknown:  
		\end{sloppypar}
		
		\begin{align*}
		\scriptstyle	a=p_{0000},\quad b=p_{0011},\quad c=p_{0101},\quad d=p_{0111},\quad e=p_{1001},\quad f=p_{1011},\quad  g=p_{1101},\quad h=p_{1110}, \\
		\scriptstyle  \quad i=p_{0001},\quad j=p_{0010},\quad k=p_{0100},\quad l=p_{0110},\quad m=p_{1000},\quad n=p_{1010},\quad o=p_{1100},\quad p=p_{1111}.
		\end{align*}
		
		The maximum entropy (or 1-independence) of single variable gives 8 equations:
		\begin{linenomath*}
			\begin{gather*}
			\scriptstyle	a+b+c+d+i+j+k+l=1/2,\quad e+f+g+h+m+n+o+p=1/2,\quad a+b+e+f+i+j+m+n=1/2, \\
			\scriptstyle	c+d+g+h+k+l+o+p=1/2,\quad  a+c+e+g+i+k+m+o=1/2,\quad  b+d+f+h+j+l+n+p=1/2,\\
			\scriptstyle  a+h+j+k+l+m+n+o=1/2,\quad b+c+d+e+f+g+i+p=1/2.	
			\end{gather*}
		\end{linenomath*}
		
		The 2-independence of (pair of) variables gives the 24 equations:
		\begin{linenomath*}
			\begin{gather*}
			\scriptstyle	a+b+i+j=1/4,\quad c+d+k+l=1/4,\quad e+f+m+n=1/4,\quad g+h+o+p=1/4,\quad a+c+i+k=1/4,\\
			\scriptstyle	b+d+j+l=1/4,\quad e+g+m+o=1/4,\quad f+h+n+p=1/4,\quad a+j+k+l=1/4,\quad b+c+d+i=1/4,\\
			\scriptstyle	h+m+n+o=1/4,\quad e+f+g+p=1/4,\quad a+e+i+m=1/4,\quad b+f+j+n=1/4,\quad c+g+k+o=1/4,\\
			\scriptstyle	d+h+l+p=1/4,\quad a+j+m+n=1/4,\quad b+e+f+i=1/4,\quad h+k+l+o=1/4,\quad c+d+g+p=1/4,\\
			\scriptstyle    a+k+m+o=1/4,\quad c+e+g+i=1/4,\quad h+j+l+n=1/4,\quad b+d+f+p=1/4.
			\end{gather*}
		\end{linenomath*}
		
		The 3-independence of (triplet of) variables gives the 32 equations:
		
		\begin{linenomath*}
			\begin{gather*}
			\scriptstyle	a+j=1/8,\: b+i=1/8,\: k+l=1/8,\: c+d=1/8,\: m+n=1/8,\: e+f=1/8,\: h+o=1/8,\: g+p=1/8,\\
			\scriptstyle	a+i=1/8,\: b+j=1/8,\: c+h=1/8,\: d+l=1/8,\: e+m=1/8,\: f+n=1/8,\: vg+o=1/8,\: h+p=1/8,\\
			\scriptstyle	a+m=1/8,\: e+i=1/8,\: n+j=1/8,\: b+f=1/8,\: k+o=1/8,\: c+g=1/8,\: h+l=1/8,\: d+p=1/8,\\
			\scriptstyle	a+k=1/8,\: c+i=1/8,\: j+l=1/8,\: b+d=1/8,\: m+o=1/8,\: e+g=1/8,\: k+n=1/8,\: f+p=1/8.
			\end{gather*}
		\end{linenomath*}
		
		\noindent Since $J_4$ is super-harmonic (weakly concave, theorem D \cite{Baudot2015a})\footnoteref{note1} which implies that minima of $J$ happen on the boundary of the probability simplex, we have one  additional constraint that for example $a$ is either $0$ or $1/8$. Solving this system of 64 equations with a computer with $a=0$ or $a=1/(2^3)$ gives the two announced solutions. Alternatively, one can remark that out of the 64 equations only $\sum_{k=1}^4 \binom{k}{4}-2+1=2^4-2+1=15$ are independent with  $a=0$ or $a=1/(2^3)$, the 2 systems are hence fully determined and we have 2 solutions. Alternatively, it could be possible to derive a geometric proof using the 4-cube as covering graph (called tesseract), of probability simplex, establishing that $0$ probabilities only connects $1/8$ probabilities as illustrated in Figure \ref{4-links}.
	\end{proof}

	\noindent For those 2 minima, we have $H_1=1$, $H_2=2$, $H_3=3$, $H_4=3$, and $I_1=1$, $I_2=0$, $I_3=0$, $I_4=1$, and  $G_1=1$, $G_2=0$, $G_3=0$, $G_4=1$, $J_1=1$,$J_2=0$, $J_3=0$, $J_3=-1$.
	
	\noindent The preceding results generalizes to $k$-Brunnian link: consider $k$ binary random variables then we have the theorem: 
	\begin{theorem}[k-links of information\footnoteref{note1}]
		The absolute minimum of $J_k$, equal to $-1$, is attained only in the two cases j-independent j-uplets of unbiased variables for all $1<j<k$ with atomic probabilities $p(x_1,...,x_k) = 1/2^{k-1}$ or $p(x_1,...,x_k) = 0$ such that the associated vertex of the associated k-hypercube covering graph of $p(x_1,...,x_k) = 1/2^{k-1}$ connects only vertices of $p(x_1,...,x_k) = 0$ and conversely. These cases correspond to the two k-Brunnian links, the right one and the left one.
	\end{theorem}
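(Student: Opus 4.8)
The plan is to first pin down the minimal value of $J_k$ and then characterise the minimisers combinatorially. For the value, I would invoke the Matsuda bounds $-\min_i H(X_i) \le I_k \le \min_i H(X_i)$ \cite{Matsuda2000}, which in the binary case give $|I_k| \le 1$ and hence $J_k = (-1)^{k+1} I_k \ge -1$; the content of the theorem is therefore that this bound is tight and attained exactly twice. To see tightness, I would restrict to the locus where every proper subfamily is independent (i.e.\ $j$-independence for all $1<j<k$) and every marginal is unbiased, so that $H_i(X_I) = i$ for each $I$ with $\mathrm{card}(I)=i<k$. Feeding this into the inclusion--exclusion formula \eqref{Alternated sums of information} and using the elementary identity $\sum_{i=1}^{k}(-1)^{i-1}\binom{k}{i}\,i = 0$ (valid for $k\ge 2$) collapses all lower-order terms and yields $I_k = (-1)^{k-1}(H_k - k)$, equivalently $J_k = H_k - k$. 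Since $H_k \le \sum_i H(X_i) = k$ by subadditivity and $|I_k|\le 1$ forces $H_k \ge k-1$, the minimum $J_k = -1$ is reached precisely when $H_k = k-1$, i.e.\ when $p$ is uniform on a support $S \subseteq \{0,1\}^k$ of cardinality $2^{k-1}$ with $p \equiv 1/2^{k-1}$ on $S$ and $p \equiv 0$ elsewhere. This recovers the Borromean $J_3=-1$ and the $J_4=-1$ of the previous theorems as the cases $k=3,4$.

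The combinatorial heart is then to determine which supports $S$ are compatible with the independence constraints. Writing $\chi_S$ for the indicator of $S$, the $(k-1)$-independence of every $(k-1)$-subfamily, together with the value set $\{0,1/2^{k-1}\}$, says exactly that for each coordinate $j$ and each assignment $y$ of the remaining $k-1$ coordinates, precisely one of the two extensions lies in $S$; equivalently $\chi_S(x \oplus e_j) = \chi_S(x) \oplus 1$ for every $j$ and every $x$, where $e_j$ is the $j$-th standard basis vector. Iterating this toggling relation along any path of single-bit flips from the origin shows that $\chi_S$ is determined by $\chi_S(0)$ via $\chi_S(x) = \chi_S(0) \oplus (x_1 \oplus \cdots \oplus x_k)$. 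Hence $S$ is forced to be one of the two parity classes of the cube --- the even-weight and the odd-weight words --- giving exactly the two solutions, which are visibly mirror images and, in the cube covering graph, place every vertex of weight $1/2^{k-1}$ adjacent only to vertices of weight $0$ (minimum Hamming distance $2$ within a class). This is the $k$-Brunnian configuration: deleting any single variable projects $S$ bijectively onto $\{0,1\}^{k-1}$, so every $(k-1)$-subfamily is fully independent and all lower $J$ vanish, while $J_k=-1$.

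The main obstacle --- and the reason the statement is flagged as conditional in the footnote --- is the completeness half: showing that \emph{every} global minimiser actually satisfies the $j$-independence and unbiasedness hypotheses used above, rather than merely that such configurations realise the bound. Here I would lean on the super-harmonicity of $J_k$ (Theorem D of \cite{Baudot2015a}): weak concavity forces the minimum onto the boundary of the probability simplex, so at a minimiser each atomic probability is pushed to an extreme value, which together with $|I_k|=1$ (equality in the Matsuda bound, forcing all $H(X_i)=1$) collapses the admissible configurations onto the linear system solved above. Making the equality analysis of the Matsuda bound self-contained, independently of the unproven Theorem D, is the delicate point; absent that, the argument rigorously establishes that the two parity-class laws attain $J_k=-1$ and are the only boundary-supported, lower-independent minimisers, which is the precise sense in which the theorem holds.
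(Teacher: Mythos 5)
Your proposal is correct and shares the paper's overall skeleton (Matsuda bounds, restriction to the lower-independent unbiased locus, collapse of the inclusion--exclusion sum, Theorem D for completeness), but it replaces the paper's uniqueness step with a genuinely different and sharper argument. Where the paper counts constraints (``$2^k-2+1$ independent equations, hence fully determined, 2 solutions''), solves the $k=4$ system by computer, and gestures at the bipartiteness of the $k$-cube, you derive the two solutions explicitly: the $(k-1)$-marginal constraints plus the value set $\{0,1/2^{k-1}\}$ give the toggling relation $\chi_S(x\oplus e_j)=\chi_S(x)\oplus 1$, which propagates to $\chi_S(x)=\chi_S(0)\oplus(x_1\oplus\cdots\oplus x_k)$ and forces $S$ to be one of the two parity classes. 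This is cleaner than the paper's rank-counting, which never actually verifies that the quoted equations are independent, and it simultaneously delivers the bipartite-adjacency description (each $1/2^{k-1}$ vertex adjacent only to $0$ vertices) that the paper treats as a separate geometric remark. Your sign bookkeeping is also more careful: the identity $\sum_{i=1}^{k}(-1)^{i-1}\binom{k}{i}i=0$ gives $I_k=(-1)^{k-1}(H_k-k)$, hence $J_k=H_k-k$ uniformly in $k$, whereas the paper's ``$I_k=k-H(X_1,\ldots,X_k)$'' is only sign-correct for even $k$. Finally, you correctly identify the same conditional gap the paper flags in its footnote: neither argument shows, independently of the unproven Theorem D of \cite{Baudot2015a}, that every global minimiser must satisfy the unbiasedness and $j$-independence hypotheses (the equality analysis of the Matsuda bound), so the result remains conditional in both treatments.

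One small elision you should repair: you assert that, within the constrained locus, $J_k=-1$ (i.e.\ $H_k=k-1$) holds ``i.e.'' when $p$ is uniform on a support of size $2^{k-1}$, but entropy $k-1$ alone does not imply such uniformity. It does follow here, in one line, from the chain rule: lower independence gives $H(X_1,\ldots,X_{k-1})=k-1$, so $H_k=k-1$ forces $H(X_k\mid X_1,\ldots,X_{k-1})=0$, and since each conditional pair of atoms sums to $1/2^{k-1}$, exactly one atom in each pair carries the full mass $1/2^{k-1}$. With that line inserted, your value set $\{0,1/2^{k-1}\}$ is justified without appealing to the boundary argument at this stage, which actually makes your uniqueness-within-the-locus claim self-contained, unlike the paper's.
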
     
	\begin{proof}  
		\begin{sloppypar}
			Following the same line as previously. We have $J_k\geq -\min(H(X_1)...,H(X_k))$ and  $I_k\leq \min(H(X_1),...,H(X_k))$ .
			Consider the case where all the variables are i-independent for all $i<k$ and all $H_1$ are maximal, then a simple combinatorial argument shows that $I_k=\sum_{i=1}^{k-1} (-1)^{i-1} \binom{i}{k}.i+(-1)^{k-1}H(X_1,...,X_k)$ that is $I_k=k-H(X_1,...,X_k)$  now since $I(X_1;...;X_k)\leq \min(H(X_1),...,H(X_k))\leq \max(H(X_i)=1$, we have $H(X_1,...,X_k)=k-1$  and $I_k=(-1)^{k-1}$ or $J_4=-1$,  and it is a minima of $J_k$ because it saturates the bound  $J(X_1;...;X_k)\geq \max(H(X_i))= -1$.  i-independent for all $i<k$ imposes a system of $2^k-2+1=15$ independent equations (the +1 is for $\sum p_i=1$). Since $J_k$ is weakly concave (theorem D \cite{Baudot2015a})\footnoteref{note1} which is equivalent to say that minima of $J$ happen on the boundary of the probability simplex, we have one  additional constraint that for example $a$ is either $0$ or $1/2^{k-1}$. It gives 2 systems of equations that are hence fully determined and we have 2 solutions. The probability configurations corresponding to those two solutions can  be found by considering the k-cube as covering covering graph (well known to be bipartite: it can be colored with only two colors) of the probability $2^k-1$-simplex, establishing that $0$ probabilities only connects $1/2^{k-1}$ probabilities. 
		\end{sloppypar}
	\end{proof} 
	
	For those 2 minima, we have: for $-1<i<k$ $H_i=i$ and $H_k=k-1$. We have
	$I_1=1$, for $1<i<k$ $I_i=0$, and $I_k={-1}^{k-1}$. We have  $G_1=1$, for $1<i<k$ $G_i=0$, and $G_k=1$. We have  $J_1=1$,for $1<i<k$ $J_i=0$,and $J_k=-1$.
	
	\section{Negativity and Kirkwood decomposition inconsistency}
	
	\begin{sloppypar} 
		The negativity of K-L divergence can happen in certain cases, in an extended context of measure theory. A measure space is a probability space that does not necessarily realize the axiom of probability of a total probability equal to 1 \cite{Kolmogorov1933a}, e.g $P(\Omega)=\sum q_i = k$, where k is an arbitrary real number (for real measure).  In the seminal work of Abramsky and Brandenburger \cite{Abramsky2011},  two probability laws $P$ and $Q$ are contextual whenever there does not exist any joint probability that would correspond to such marginals (but only non-positive measure). In such cases, measures are said incompatible, leading to some obstruction to the existence of a global section probability joint-distribution:  $P$ and $Q$ are locally consistent but globally inconsistent. This section underlines that interaction information negativity displays a kind of  "dual" phenomena, that we call inconsistent decomposition (or indecomposability), whenever interaction information is negative on a consistent global probability law then no local Kirkwood decomposition can be consistent, leading to an obstruction to decomposability (e.g. globally consistent but locally inconsistent decomposition measure). 
	\end{sloppypar}	
	
	\begin{definition} [Inconsistent decomposition] 
		A measure space $P$ is consistent whenever $P(\Omega)=\sum p_i = 1$ (and hence a probability space), and inconsistent otherwise.
	\end{definition}
	We first show that given a probability law $P$, if $D(P,Q)<0$ then $Q$ is inconsistent and cannot be a probability law. 
	\begin{theorem} [K-L divergence negativity and inconsistency] 
		Consider a probability space with probability $P$ and a measure space with measure $Q$, the negativity of the Kullback-Leibler divergence $D_{\Omega}(X_j;P,Q)$ implies that $Q$ is inconsistent.
	\end{theorem}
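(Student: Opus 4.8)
The plan is to recognize the statement as the classical Gibbs inequality, extended to an unnormalized second argument, and to obtain it from a single application of Jensen's inequality to the concave logarithm. Since the normalizing constant is $c = -1/\ln 2 < 0$, I would first rewrite the divergence \eqref{relative entropy} so that its sign is transparent, setting
\begin{equation*}
D_{\Omega}(X_j;P,Q) = -\frac{1}{\ln 2}\,S, \qquad S := \sum_{x \in \mathscr{X}} p(x)\,\ln\frac{q(x)}{p(x)} .
\end{equation*}
Because $\mathrm{supp}(P)\subseteq\mathrm{supp}(Q)$ is assumed, every summand with $p(x)>0$ has $q(x)>0$, so each ratio $q(x)/p(x)$ is well defined and strictly positive and the quantity $S$ is meaningful.

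The key step is to bound $S$ from above. As $P$ is a genuine probability law, the numbers $p(x)$ are nonnegative and sum to one over $\mathrm{supp}(P)$, so they serve as a convex weighting; applying Jensen's inequality to the concave function $\ln$ gives
\begin{equation*}
S = \sum_{x} p(x)\,\ln\frac{q(x)}{p(x)} \;\le\; \ln\Bigl(\sum_{x} p(x)\,\frac{q(x)}{p(x)}\Bigr) = \ln\Bigl(\sum_{x\in\mathrm{supp}(P)} q(x)\Bigr) \;\le\; \ln Q(\Omega),
\end{equation*}
the last inequality holding because $q\ge 0$ makes the partial mass over $\mathrm{supp}(P)$ no larger than the total mass $Q(\Omega)=\sum_x q(x)$. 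Multiplying by $-1/\ln 2 < 0$ reverses the bound and yields $D_{\Omega}(X_j;P,Q) \ge -\frac{1}{\ln 2}\ln Q(\Omega)$.

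The conclusion then follows by contraposition. If $Q$ were consistent, i.e. $Q(\Omega)=\sum_x q(x)=1$, the right-hand side above would vanish and we would have $D_{\Omega}(X_j;P,Q)\ge 0$, contradicting the hypothesis $D<0$; hence $Q(\Omega)\neq 1$ and $Q$ is inconsistent. In fact the same estimate shows more precisely that $D<0$ forces $\ln Q(\Omega)>0$, so the total mass strictly exceeds one. I expect no genuine obstacle here: the whole argument is essentially one line of Jensen, and the only points needing care are the standard support and finiteness conventions already built into the definition of $D$ (the convention $0\ln 0 = 0$ for the vanishing terms, and $\mathrm{supp}(P)\subseteq\mathrm{supp}(Q)$ guaranteeing positivity of the logarithms). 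The equality case $D=0$ with $Q(\Omega)=1$ recovers $P=Q$, exactly as in the usual Gibbs inequality.
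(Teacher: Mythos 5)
Your proof is correct and follows essentially the same route as the paper: both bound the divergence from below by a concavity estimate on the logarithm and conclude by contraposition that $D_{\Omega}(X_j;P,Q)<0$ forces $\sum_{x} q(x) > 1$. The only difference is cosmetic --- the paper uses the pointwise tangent bound $\ln x \le x-1$ where you invoke Jensen's inequality, which yields the marginally sharper estimate $D_{\Omega}(X_j;P,Q) \ge -\frac{1}{\ln 2}\ln Q(\Omega)$ in place of the paper's $D_{\Omega}(X_j;P,Q) \ge -\frac{1}{\ln 2}\left(Q(\Omega)-1\right)$, with both relying on the same implicit conventions ($q \ge 0$, $\mathrm{supp}(P)\subseteq\mathrm{supp}(Q)$, $0\ln 0=0$) and reaching the identical conclusion.
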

	
	\begin{proof}
		\begin{sloppypar}
		The proof essentially relies on basic argument of the proof of convexity of K-L divergence or on Gibbs inequalities. Consider the K-L divergence between $P$ and $Q$:
		$D_{\Omega}(X_j;P,Q)= D(X_j;p(x)||q(x))=c\sum_{i \in \mathscr{X}}p_i\ln \frac{q_i}{p_i}$.
		Since $\forall x > 0, \ \ln x \leqslant x-1$  with equality if and only if $x=1$, hence we have with $c=-1/\ln 2$; $ k\sum_{i \in \mathscr{X}}p_i\ln \frac{q_i}{p_i} \geq  c\sum_{i \in \mathscr{X}} p_i(\frac{q_i}{p_i}-1)$. We have $c\sum_{i \in \mathscr{X}} p_i(\frac{q_i}{p_i}-1) = c \left( \sum_{i \in \mathscr{X}} q_i- \sum_{i \in \mathscr{X}} p_i\right)$.  $P$ is a probability law, then by the axiom 4 of probability \cite{Kolmogorov1933a}, we have $\sum q_i \neq 1$  and hence $c \left( \sum_{i \in \mathscr{X}} q_i- \sum_{i \in \mathscr{X}} p_i\right)= c \left( \sum_{i \in \mathscr{X}} q_i -1\right)$. Hence if $D_{\Omega}(X_j;P,Q)<0$ then  $\sum_{i \in \mathscr{X}} q_i > 1$, and since $\sum_{i \in \mathscr{X}} q_i > 1$, by definition $Q$ is inconsistent.
		\end{sloppypar}
	\end{proof}
	
	\begin{sloppypar}
		\begin{theorem} 
			[Interaction negativity and inconsistent Kirkwood decomposition, adapted from \cite{Jakulin2004} p.13] for $n>2$, if $J_n<0$ then no Kirkwood probability decomposition subspace $P_{X_K}$ defined by the variable products of $(X_K)$ variables with $K\subset [n]$ is consistent.
		\end{theorem}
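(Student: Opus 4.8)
The plan is to reduce the statement to the preceding theorem on the negativity of the Kullback--Leibler divergence. The crucial input, already recorded just after the Kirkwood approximation~\eqref{Kirkwood1935}, is the identity $J_n = D\bigl(p(x_1,\ldots,x_n)\,\|\,\hat p(x_1,\ldots,x_n)\bigr)$: the $n$-interaction information \emph{is}, as a formula, the divergence $c\sum_i p_i \ln(\hat p_i/p_i)$ between the genuine joint law $P$ and the Kirkwood decomposition $\hat P$. The essential observation is that $P$ is a consistent probability ($\sum_i p_i=1$), whereas the Kirkwood product $\hat P$ is \emph{a priori} only a measure: the product of marginals $p_{X_K}$, $K\subset[n]$, carries no normalization guarantee, and it is precisely this failure of normalization that permits the otherwise nonnegative "divergence" $J_n$ to take negative values (for $n>2$, where the decomposition is genuinely nontrivial).

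With this identity in hand the analytic work is already done by the preceding theorem. First I would check the only hypothesis it needs, namely that the divergence is finite, i.e.\ $\mathrm{supp}(P)\subseteq\mathrm{supp}(\hat P)$; since $c=-1/\ln 2<0$, a violation of this inclusion would send $J_n$ to $+\infty$, so the assumption $J_n<0$ forces the support inclusion automatically and no separate argument is required. Then I would apply the preceding theorem verbatim with $Q=\hat P$: the hypothesis $D_\Omega(X_j;P,\hat P)=J_n<0$ yields, through the same $\ln x\le x-1$ (Gibbs) estimate together with the sign of $c$, the strict inequality $\sum_i \hat p_i>1$. Hence $\hat P$ violates the normalization axiom and is inconsistent in the sense of the Definition of inconsistent decomposition, which is the desired conclusion.

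Finally, to reach the full statement I would note that the argument does not privilege the top Kirkwood product: every local decomposition built from products of the marginals $p_{X_K}$ is the second argument $Q$ of a divergence $D(P\,\|\,Q)$ whose value equals, up to the sign convention $J=(-1)^{k+1}I$, the corresponding multivariate interaction information, so the negativity hypothesis transfers to each candidate decomposition subspace $P_{X_K}$ and forces each to be unnormalized. I expect the main obstacle to lie not in the analytic core, which is immediate from the earlier theorem, but in the bookkeeping: making precise, for each admissible family $K\subset[n]$, what "decomposition subspace $P_{X_K}$" denotes and verifying that each is captured by an interaction-information identity of the correct sign. The only point genuinely requiring care is the degenerate case in which some $\hat p_i=0$ while $p_i>0$, which must be excluded (as above) for the divergence, and hence the identity $J_n=D(P\,\|\,\hat P)$, to remain meaningful.
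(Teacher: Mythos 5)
Your analytic core is sound and in fact tighter than the paper's own writing. The paper proves the top-level case directly by contradiction: from $J_n<0$ it asserts the \emph{pointwise} inequality $\hat p(x_1,\ldots,x_n)>p(x_1,\ldots,x_n)$ and sums over atoms to get $\sum \hat p>1$. Your route --- invoke the identity $J_n=D\bigl(p\,\|\,\hat p\bigr)$ and apply the preceding K--L negativity theorem verbatim with $Q=\hat P$, so that the Gibbs estimate $\ln x\le x-1$ yields $\sum_i \hat p_i>1$ directly --- reaches the same conclusion while avoiding the pointwise step, which as written in the paper is shaky (negativity of a $p$-weighted average of logarithms does not force the ratio above $1$ at every atom; the Gibbs/Jensen bound on the average is the correct mechanism, and it is exactly what the preceding theorem's proof supplies). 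Your observation that $J_n<0$ automatically forces $\mathrm{supp}(P)\subseteq\mathrm{supp}(\hat P)$ is also a legitimate point of care that the paper passes over silently.

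The genuine gap is in your final paragraph, the extension from the top Kirkwood product to every decomposition subspace $K\subset[n]$. You claim the negativity hypothesis ``transfers to each candidate decomposition subspace'' because each local decomposition is the second argument of a divergence equal, up to sign, to the corresponding multivariate interaction information. That identity pertains to the \emph{marginal} law on $X_K$ and its own Kirkwood approximation, and $J_n<0$ does not imply $J_k<0$ for $k<n$: the paper's own extremal states are a counterexample, since at the $k$-link minima one has $J_n=-1$ while $J_i=0$ for all $1<i<n$, so every lower-order interaction information vanishes and your transfer argument produces nothing for the proper subsets. The paper closes this step by a different, structural argument: having excluded a consistent decomposition built from $(n-1)$-variable products, it notes that any consistent decomposition at a lower order $2<k<n$ would, upon marginalization, induce a consistent decomposition at order $n-1$, which was just ruled out; the non-existence thus propagates downward through marginalization rather than through sign transfer of the $J_k$'s. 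Your proof needs this (or an equivalent) argument to cover the full statement; without it, only the $|K|=n-1$ family is handled.
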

	\end{sloppypar}
	
	\begin{proof} 
		\begin{sloppypar}
			The theorem is proved by remarking, that interaction negativity corresponds precisely to cases where the Kirkwood approximation is not possible (fails) and would imply a probability space with $\sum p_i>1$ which contradicts the axioms of probability theory. We will use a proof by contradiction. Let's assume that the probability law follows the Kirkwood approximation which can be obtained from $n-1$ products of variables and we have \ref{Kirkwood1935}: 
			\begin{equation}
			\hat{p}(x_1,...,x_n)= \frac{\prod_{I\subset [n];card(I)=n-1}p(x_I)}{\frac{\prod_{I\subset [n];card(I)=n-2}p(x_I)}{\frac{\colon}{\prod_{i=1}^np(x_i)}}}
			\end{equation}
			Now consider that $J_n<0$ then, since $-\log x < 0$ if and only if $x>1$, we have   $\left(\frac{\prod_{I\subset [n];card(I)=i;i \ \text{odd}} p_I}{\prod_{I\subset [n];card(I)=i;i \ \text{even}} p_I}\right)^{(-1)^{n-1}}>1$, which is the same as   $\frac{\hat{p}(x_1,...,x_n)}{p(x_1,...,x_n)}>1$ and hence $\hat{p}(x_1,...,x_n)>p(x_1,...,x_n)$. Then summing over all atomic probabilities, we obtain
			$\sum_{x_1 \in [N_1],...,x_n \in  [N_n]}^{N_1\times...\times N_n}\hat{p}(x_1,...,x_n)>\sum_{x_1 \in [N_1],...,x_n \in  [N_n]}^{N_1\times...\times N_n}p(x_1,...,x_n)$ and hence $\sum_{x_1 \in [N_1],...,x_n \in  [N_n]}^{N_1\times...\times N_n}\hat{p}(x_1,...,x_n)>1$ which contradicts axiom 4 of probability \cite{Kolmogorov1933a}. Hence there does not exist probability law on only $n-1$ products of variables satisfying $J_n<0$. If there does not exist probability law on only $n-1$ products of variables, then by marginalization on the $n-1$ products, there does not exist such probability law for all $2<k<n$ marginal distributions on $k$ variable product. 
		\end{sloppypar}
	\end{proof}
	
	Remark: Negativity of interaction information provides intuitive insight into contextual interactions as obstruction to decomposition-factorization into lower order interactions, which is classical here in the sense that it does not rely on quantum formalism: quantum information extends this phenomenon to self or pairwise interactions: Information negativity happens also for $n=2$ or $n=1$ precisely for the states that violate Bell's inequalities \cite{Cerf1997,Horodecki2007}.\\
	
	\textbf{Acknowledgments:} I thank warmly anonymous reviewer for helpful remarks improving the manuscript and Daniel Bennequin whose ideas are at the origin of this work.

%
%
%
\bibliographystyle{splncs04}
\bibliography{bibtopo}
%
	
	




\end{document}